\newtheorem{thm}{Theorem}
\newtheorem{lma}{Lemma}
\newtheorem{Def}{Definition}
\DeclareMathOperator{\E}{\mathbb{E}}
\newcommand{\lb}{\left (}
\newcommand{\rb}{\right )}
\newcommand{\script}[1]{{\mathcal {#1}}}
\newcommand{\Iavg}{I_{\rm avg}}
\newcommand{\gmax}{g_{\rm max}}
\newcommand{\fgi}{f_{g_i}}
\newcommand{\EE}[1]{\E \left[ #1 \right]}
\newcommand{\EEU}[1]{\E_{\bfU(t)} \left[ #1 \right]}
\newcommand{\bgi}{\overline{g}_i}
\newcommand{\bW}{\overline{W}}
\newcommand{\Pit}{P_i(t)}
\newcommand{\git}{g_i(t)}
\newcommand{\bfP}{{\bf P}}
\newcommand{\bgammai}{\overline{\gamma}_i}
\newcommand{\bfQ}{{\bf Q}}
\newcommand{\Xvq}{\{X(t)\}_{t=0}^\infty}
\newcommand{\Yivq}{\{Y_i(t)\}_{t=0}^\infty}
\newcommand{\Qiq}{\{Q_i(t)\}_{t=0}^\infty}
\newcommand{\bfY}{{\bf Y}}
\newcommand{\bfU}{{\bf U}}
\newcommand{\pardef}[1]{\triangleq [#1_1^{(t)},\cdots,#1_N^{(t)}]^T}
\newcommand{\parFdef}[1]{\triangleq [#1_1(t),\cdots,#1_N(t)]^T}
\newcommand{\Ri}{R_i{(t)}}
\newcommand{\gammait}{\gamma_i(t)}
\newcommand{\Rmax}{R_{\rm max}}
\newcommand{\Amax}{A_{\rm max}}
\newcommand{\gammamax}{\gamma_{\rm max}}
\newcommand{\Ts}{T_{\rm s}}
\newenvironment{prf}{\par{\noindent \it Proof:}}{\qed\par}
\title{Dynamic Scheduling for Delay Guarantees for Heterogeneous Cognitive Radio Users}
\author{Ahmed Ewaisha and Cihan Tepedelenlio\u{g}lu\\
\small{School of Electrical, Computer, and Energy Engineering, Arizona State University, USA.}\\
\small{Email:\{ewaisha, cihan\}@asu.edu}}
\begin{document}

\maketitle

\begin{abstract}
We study an uplink multi secondary user (SU) system having statistical delay constraints, and an average interference constraint to the primary user (PU). SUs with heterogeneous interference channel statistics, to the PU, experience heterogeneous delay performances since SUs causing low interference are scheduled more frequently than those causing high interference. We propose a scheduling algorithm that can provide arbitrary average delay guarantees to SUs irrespective of their statistical channel qualities. We derive the algorithm using the Lyapunov technique and show that it yields bounded queues and satisfy the interference constraints. Using simulations, we show its superiority over the Max-Weight algorithm.
\end{abstract}

\section{Introduction}
\label{Introduction}
The problem of scarcity in the spectrum band has led to a wide interest in cognitive radio (CR) networks. CRs refer to devices that are capable of dynamically adjusting their transmission parameters according to the environment without causing harmful interference to the surrounding existing primary users (PU).

In real-time applications, such as audio and video conference calls, one of the most effective QoS metrics is the average time a packet spends in the queue before being transmitted, quantified by average queuing delay. The average queuing delay needs to be as small as possible to prevent jitter and to guarantee acceptable QoS for these applications \cite{shakkottai2002scheduling,kang2013performance}. Queuing delay has gained strong attention recently and scheduling algorithms have been proposed to guarantee small delay in wireless networks (see e.g., \cite{asadi2013survey} for a survey on scheduling algorithms in wireless systems). In the context of CR systems among the references that discuss the scheduling are \cite{Letaief_PU_Known_Location,NEP_Distributed,Ewaisha_TVT2015,Iter_Bit_Allocation_OFDM,6464638,Neely_CNC_2009}. An uplink CR system is considered in \cite{Letaief_PU_Known_Location} where the authors propose a scheduling algorithm that minimizes the interference to the PU where all users' locations including the PU's are known to the secondary base station. In \cite{NEP_Distributed} a distributed scheduling algorithm that uses an on-off rate adaptation scheme is proposed. The work in \cite{Neely_CNC_2009} proposes a scheduling algorithm to maximize the capacity region subject to a collision constraint on the PUs. The algorithms proposed in all these works aim at optimizing the throughput for the secondary users (SUs) while protecting the PUs from interference. However, providing guarantees on the queuing delay in CR systems was not the goal of these works.

The fading nature of the wireless channel requires adapting the user's rate according to the channel's fading coefficient. Many existing works on scheduling algorithms consider two-state on-off wireless channels and do not consider multiple fading levels. Among the relevant references that consider a more general fading channel model are \cite{neely2003power} and \cite{Min_Pow_4_Delay_NonCR} which did not include an average interference constraint as well as \cite{Fading_No_Scheduling,E_Hossain_CR_Delay_Analysis} where the optimization over the scheduling algorithm was not considered.

Perhaps the closest to our work are \cite{Neely_CNC_2009,li2011delay}. In \cite{Neely_CNC_2009}, the authors propose an algorithm that guarantees that the probability of collision with the PU kept below an acceptable threshold but does not give guarantees to the delay performance. The authors in \cite{li2011delay} propose a scheduling algorithm that yields an acceptable average delay performance for each user. However, in order to guarantee that the interference constraint is satisfied as well, they propose a power allocation algorithm which might not be applicable in low-cost transmitters as wireless sensor devices.

In this work we propose a scheduling algorithm that can provide delay guarantees to the SUs and protect the PUs at the same time. We show that conventional existing algorithms as the max-weight scheduling algorithm, if applied directly, can degrade the quality of service of both SUs as well as the PUs. The challenges of this problem lie in the interference constraint where it is required to protect the PU although the SUs are not capable of changing their transmission power levels. Moreover, the statistical heterogeneity of the channels might cause undesired performance for the SUs. This is because SUs located physically closer to the PUs might suffer from larger delays because closer SUs are scheduled less frequently. The SUs should be scheduled in such a way that prevents harmful interference to the PUs since they share the same spectrum. The main contribution of this paper is to propose a scheduling algorithm that satisfies both the average interference and average delay constraints.



The rest of the paper is organized as follows. The network model and the underlying assumptions are presented in Section \ref{Model}. In Section \ref{Prob_Statement} we formulate the problem mathematically. The proposed algorithm and its optimality are presented in Section \ref{Proposed_Algorithm}. Section \ref{Results} presents our simulation results. The paper is concluded in Section \ref{Conclusion}.

\section{System Model}
\label{Model}
\subsection{Channel Model}
We assume a CR system consisting of a single secondary base station (BS) serving $N$ SUs in the uplink and a single PU having access to a single frequency channel. The users are indexed according to the set $\script{N}\triangleq\{1,\cdots,N\}$. Time is divided into slots with duration $\Ts$. The PU is assumed to use the channel each time slot with probability 1. The channel between SU $i$ and the BS is referred to as direct channel $i$, while that between SU $i$ and the PU is referred to as interference channel $i$. Direct and interference channels $i$, at slot $t$, have states $\gamma_i(t)\in[0,\gammamax]$ and $g_i(t)\in[0,\gmax]$, and they follow some probability density functions $f_{\gamma_i}(\gamma)$ and $\fgi(g)$ with means $\bgammai$ and $\bgi$, respectively. Channels' states $\gammait$ and $\git$, $\forall i\in \script{N}$, are known to the BS at the beginning of time slot $t$. The channel estimation to acquire $g_i^{(t)}$ can be done by overhearing the pilots transmitted by the primary receiver, when it is acting as a transmitter, to its intended transmitter. The channel estimation phase is out of the scope of this work and the reader is referred to \cite[Section VI]{haykin2005cognitive} and \cite{Bari13ciss,Bari13asilomar,Bari14asilomar} for details on channel estimation in CRs. Direct and interference channel states are assumed to be independent and identically distributed across time slots while independent across SUs but not necessary identically distributed.

\subsection{Queuing Model}
At the beginning of each time slot $t$ packets arrive at SU $i$'s buffer with rate $\lambda_i$ packets per slot, with maximum number of arrivals $\Amax$. All packets in the system have the same length of $L$ bits. In a practical scenario, depending on the application, $L$ might be relatively small (audio packets), or relatively large (e.g. video packets). In the former, more than one packet can fit in one time slot, while in the latter a single packet might need more than one time slot for transmission \cite[Section 3.1.6.1]{semiconductor2008long}. Although in this paper we focus on the case of small $L$, our model can work for the other case as well. We assume that the buffer sizes are infinite and packets arriving to the buffer are served according to the first-come-first-serve discipline. The number of packets at SU $i$'s buffer at the beginning of slot $t$ is $Q_i(t)$ that is governed by
\begin{equation}
Q(t+1)=\lb Q(t)+\vert \script{A}_i(t)\vert-\vert \script{D}_i(t)\vert \rb^+,
\label{Queue}
\end{equation}
where $x^+\triangleq\max(x,0)$ while the set $\script{A}_i(t)$ (the set $\script{D}_i(t)$) is the set contained the indices of the packets arriving to (departing from) user $i$'s buffer at the beginning (end) of slot $t$. Define the delay of packet $j$ as $W_i^{(j)}$ which is the number of slots packet $j$ has spent in the system from the time it arrives to SU $i$'s buffer until the time it is transmitted, including the transmission time slot. $W_i^{(j)}$ has a time average $\bW_i$ that is a depends on the scheduling algorithm by which SUs are scheduled. $\bW_i$ is given by
\begin{equation}
\bW_i \triangleq \limsup_{T \rightarrow \infty}\frac{\EE{\sum_{t=0}^T{\sum_{j\in\script{A}_i(t)}W^{(j)}_i(t)}}}{\EE{\sum_{t=0}^T{\vert \script{A}_i(t)\vert}}}.
\label{Delay}
\end{equation}

\subsection{Transmission Process}
At the beginning of time slot $t$, the BS chooses a user, say user $i$, according to some scheduling algorithm. Define the vector $\bfP(t)\pardef{P}$ where $P_i(t)=1$ if SU $i$ is allocated the channel at time slot $t$ and $0$ otherwise. If $P_i(t)=1$, SU $i$ adapts its transmission rate according to the channel's gain and begins transmission its packets. Thus the number of packets transmitted is
\begin{equation}
\Ri=\log \lb 1+P_i(t)\gamma_i{(t)} \rb \hspace{0.1in} \rm{packets},
\label{Tx_Rate}
\end{equation}
with a maximum rate $\Rmax\triangleq\log \lb 1+\gammamax\rb$. At the end of this time slot, the BS receives these packets error-free and then slot $t+1$ begins.

\section{Problem Statement}
\label{Prob_Statement}
Each SU $i$ has an average delay constraint $\bW_i \leq d_i$ that needs to be satisfied, where $d_i$ is the maximum average delay that SU $i$ can tolerate. Moreover, the PU can tolerate a maximum interference of $\Iavg$ aggregated over the interference received from all SUs. Define $\script{T}$ as the set containing the indices of the time slots where there is at least one SU having at least one packet in its queue, or $\script{T}\triangleq \{t:t\geq 1, \sum_{i=1}^N{Q_i(t)}>0\}$. The main objective is to find a scheduling algorithm that guarantees the SUs' delay constraints as well as the PU's interference constraint. That is, find the value of $\bfP(t)$ at each slot $t$ that solves
\begin{equation}
\begin{array}{ll}
\underset{\{\bfP(t)\}}{\rm{minimize}}& 0
\label{Problem}\\
\rm{subject \; to} & I\triangleq \lim_{T\rightarrow \infty} \sum_{i=1}^N{\frac{1}{T}\sum_{t=1}^T{P_i(t) \git}} \leq \Iavg\\
& \bW_i \leq d_i , \hspace{0.1in} \forall i\in\script{N}\\
&\EE{\vert\script{A}_i(t)\vert}\leq\EE{\vert\script{D}_i(t)\vert}, \hspace{0.1in} \forall i\in\script{N}\\
& \sum_{i=1}^N{ P_i(t)} \leq 1 \hspace{0.25in}, \hspace{0.1in} \forall t\in \script{T}
\end{array}
\end{equation}
where $I$ is the time-averaged interference affecting the PU due to the SUs' transmissions.

We notice that the constraints of problem \eqref{Problem} are expressed in terms of asymptotic time averages and cannot be solved by conventional optimization techniques. The next section proposes a low complexity update algorithm and proves its convergence to a feasible point solving \eqref{Problem}.

\section{Solution Approach}
\label{Proposed_Algorithm}
We propose an online algorithm that schedules the users at slot $t$ based on the history up to slot $t$. We show that this algorithm has an optimal performance.

\subsection{Satisfying Delay Constraints}
In order to satisfy the delay constraints in problem \eqref{Problem}, we set up a ``virtual queue'' associated with each delay constraint $\bW_i\leq d_i$. The virtual queue for SU $i$ at slot $t$ is given by
\begin{equation}
Y_i(t+1)\triangleq\lb Y_i(t)+\sum_{j\in \script{D}_i(t)}{\lb W_i^{(j)}-d_i\rb} \rb^+
\label{Delay_Q}
\end{equation}
where we initialize $Y_i(0)\triangleq 0$, $\forall i$. We define $\bfY(t) \parFdef{Y}$. Equation \eqref{Delay_Q} is calculated at the end of slot $t-1$ and represents the amount of delay exceeding the delay bound $d_i$ for SU $i$ up to the beginning of slot $t$. We first give the following definition, then state a lemma that gives a sufficient condition on $Y_i(t)$ for the delay of SU $i$ to satisfy $\bW_i \leq d_i$.
\begin{Def}
\label{Mean_Rate_Def}
A random sequence $\{Y_i(t)\}_{t=0}^\infty$ is mean rate stable if and only if the equality $\lim_{T\rightarrow\infty}\EE{Y_i(T)}/T=0$ holds.
\end{Def}

\begin{lma}
\label{Mean_Rate_Lemma}
If the arrival rate vector ${\bf \lambda}\triangleq[\lambda_1,\cdots\lambda_N]^T$ can be supported, i.e. the number of arrivals equals the number of departures over a large period of time, and if $\{Y_i(t)\}_{t=0}^\infty$ is mean rate stable, then the time-average delay of SU $i$ satisfies $\bW_i \leq d_i$.
\end{lma}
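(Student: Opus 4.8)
The plan is to exploit the one-step dynamics of the virtual queue \eqref{Delay_Q} to turn mean rate stability into a bound on accumulated excess delay, and then to identify that accumulated quantity with $\bW_i$. First I would drop the projection $(\cdot)^+$ in \eqref{Delay_Q}, which can only decrease the right-hand side, to get $Y_i(t+1)\geq Y_i(t)+\sum_{j\in\script{D}_i(t)}\lb W_i^{(j)}-d_i\rb$. Telescoping this from $t=0$ to $T-1$ and using $Y_i(0)=0$ yields
\[
Y_i(T)\geq \sum_{t=0}^{T-1}\sum_{j\in\script{D}_i(t)} W_i^{(j)} - d_i\sum_{t=0}^{T-1}|\script{D}_i(t)|.
\]
The virtual queue thus acts as a running tally of how much the served packets' delays have exceeded the budget $d_i$, and keeping $Y_i$ from growing will force this tally to be nonpositive on average.

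Next I would take expectations and divide by $\EE{\sum_{t=0}^{T-1}|\script{D}_i(t)|}$, the expected number of packets served up to slot $T$. Rearranging bounds the departure-indexed average delay by $d_i$ plus the residual $\EE{Y_i(T)}/\EE{\sum_{t=0}^{T-1}|\script{D}_i(t)|}$. Because the arrival rate is supportable, the expected cumulative departures grow linearly, i.e. the denominator is $\Theta(T)$ with slope equal to the throughput $\lambda_i>0$; combined with mean rate stability $\EE{Y_i(T)}/T\to 0$, the residual vanishes as $T\to\infty$. Taking $\limsup$ then gives that the long-run average delay of \emph{departing} packets is at most $d_i$.

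The remaining step is to match this departure-indexed average with the arrival-indexed quantity $\bW_i$ of \eqref{Delay}. Here I would use that, under supportability, every arriving packet eventually departs, so over a long horizon the packets counted by arrivals up to $T$ and by departures up to $T$ differ only by those still in queue at time $T$; since each $W_i^{(j)}$ is an intrinsic per-packet delay not depending on whether it is indexed by its arrival or its departure slot, the numerators and denominators of the two ratios agree up to these boundary contributions. The hard part will be making this interchange rigorous: one must show the in-flight packets at time $T$ contribute negligibly to both numerator and denominator, which is exactly where the hypothesis that arrivals equal departures over a large period is essential, since it controls the (stable) backlog whose boundary terms must wash out in the $\limsup$. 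Concluding this gives $\bW_i\leq d_i$.
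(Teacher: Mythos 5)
Your proposal follows essentially the same route as the paper's proof: drop the $(\cdot)^+$ in \eqref{Delay_Q}, telescope with $Y_i(0)=0$, take expectations and divide by the expected cumulative departures, then invoke mean rate stability together with the supportability assumption to make the residual $\EE{Y_i(T)}/\EE{\sum_{t=0}^{T-1}\vert\script{D}_i(t)\vert}$ vanish and to equate the departure-indexed ratio with the arrival-indexed $\bW_i$ of \eqref{Delay}. If anything, you are more explicit than the paper about the one genuinely delicate step---showing the in-flight packets' boundary contributions wash out when converting between departure- and arrival-indexed averages---which the paper compresses into a single sentence about cumulative arrivals equaling cumulative departures.
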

\begin{proof}
Removing the $(\cdot)^+$ sign from equation \eqref{Delay_Q} yields
\begin{equation}
Y_i(t+1) \geq Y_i(t)+\sum_{j\in \script{D}_i(t)}{\lb W_i^{(j)}-d_i\rb}.
\label{Inequality1}
\end{equation}
Summing inequality \eqref{Inequality1} over $t=0,\cdots T-1$ and noting that $Y_i(0)=0$ gives
\begin{equation}
Y_i(T)\geq \sum_{t=0}^{T-1} \lb\sum_{j\in \script{D}_i(t)}W_i^{(j)}\rb-d_i\sum_{t=0}^{T-1}\lb \vert D_i(t)\vert\rb.
\label{Inequality2}
\end{equation}
Taking the $\EE{\cdot}$ then dividing by $\EE{\sum_{t=0}^{T-1}{\vert\script{D}_i(t)\vert}}$ gives
\begin{equation}
\frac{\EE{\sum_{t=0}^{T-1} \lb\sum_{j\in \script{D}_i(t)}W_i^{(j)}\rb}}{\EE{\sum_{t=0}^{T-1}{\vert\script{D}_i(t)\vert}}} \leq \frac{\EE{Y_i(T)}}{\EE{\sum_{t=0}^{T-1}{\vert\script{D}_i(t)\vert}}} + d_i.
\label{Wait_r_i}
\end{equation}
Using the fact that there exists some $T\geq1$ where $\sum_{t=0}^{T-1}{\vert\script{D}_i(t)\vert}=\sum_{t=0}^{T-1}{\vert\script{A}_i(t)\vert}$, a fact based on the lemma's queue-stability assumption, taking the limit as $T\rightarrow\infty$, using the identity $\EE{\vert\script{A}_i(t)\vert}=\lambda_i$, the mean rate stability definition as well as equation \eqref{Delay} completes the proof.
\end{proof}
Lemma \ref{Mean_Rate_Lemma} provides a condition on the virtual queue $\Yivq$ so that SU $i$'s average delay constraint $\bW_i\leq d_i$ in \eqref{Problem} is satisfied. That is, if the proposed scheduling algorithm results in a mean rate stable $\Yivq$, then $\bW_i\leq d_i$.

\subsection{Satisfying Interference Constraints}

To track the average interference at the PU up to the end of slot $t$ we set up the following virtual queue that is associated with the average interference constraint in problem \eqref{Problem} and is calculated at the BS at the end of slot $t$.
\begin{equation}
X(t+1)\triangleq\lb X(t)+\sum_{i=1}^N{ P_i{(t)}g_i{(t)}}-\Iavg \rb^+
\label{Interf_VQ}
\end{equation}
where the term $\sum_{i=1}^N{ P_i{(t)}g_i{(t)}}$ represents the aggregate amount of interference energy received by the PU due to the transmission of a SU during slot $t$. Hence, this virtual queue is a measure of how much the SUs have exceeded the interference constraint above the level $\Iavg$ that the PU can tolerate. Lemma \ref{Mean_Rate_Lemma_Avg_Interf} provides a sufficient condition for the interference constraint of problem \eqref{Problem} to be satisfied.

\begin{lma}
\label{Mean_Rate_Lemma_Avg_Interf}
If $\{X(t)\}_{t=0}^\infty$ is mean rate stable, then the time-average interference received by the PU satisfies $I \leq \Iavg$.
\end{lma}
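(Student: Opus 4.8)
The plan is to mirror the proof of Lemma~\ref{Mean_Rate_Lemma}, since the interference virtual queue \eqref{Interf_VQ} has exactly the same recursive structure as the delay virtual queue \eqref{Delay_Q}. First I would drop the $(\cdot)^+$ operator from \eqref{Interf_VQ}; because $\lb x\rb^+\geq x$ for all real $x$, this gives the one-step lower bound $X(t+1)\geq X(t)+\sum_{i=1}^N{\Pit\git}-\Iavg$, which replaces the nonlinear clipping by a tractable linear recursion without affecting the direction of the eventual inequality.

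Next I would sum this bound over $t=0,\dots,T-1$. The left side telescopes, and using the initialization $X(0)=0$ (any finite $X(0)$ would serve equally well, since it is divided by $T$ later), the accumulated bound becomes
\begin{equation}
X(T)\geq \sum_{t=0}^{T-1}\sum_{i=1}^N{\Pit\git}-T\Iavg.
\end{equation}
Dividing by $T$ and rearranging isolates the running interference average as $\frac{1}{T}\sum_{t=0}^{T-1}\sum_{i=1}^N{\Pit\git}\leq \frac{X(T)}{T}+\Iavg$. Taking expectations on both sides and then letting $T\to\infty$, the term $\EE{X(T)}/T$ vanishes by the mean rate stability hypothesis together with Definition~\ref{Mean_Rate_Def}, leaving $\limsup_{T\to\infty}\frac{1}{T}\EE{\sum_{t=0}^{T-1}\sum_{i=1}^N{\Pit\git}}\leq\Iavg$, which is the desired bound $I\leq\Iavg$.

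The one point I expect to need care is the gap between the expectation-based definition of mean rate stability and the deterministic sample-path time average that defines $I$ in \eqref{Problem}. The telescoping argument most directly controls the \emph{expected} time-averaged interference, so to conclude the stated inequality for the pathwise quantity $I$ I would either interpret the interference constraint in the expected sense or invoke the assumed i.i.d.\ (hence stationary and ergodic) behavior of the channel gains $\git$ across slots to transfer the expected average to the almost-sure one. Apart from this identification, the argument is a routine transcription of the proof of Lemma~\ref{Mean_Rate_Lemma} and introduces no new difficulty.
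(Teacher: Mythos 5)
Your proof is correct and is exactly the argument the paper intends: the paper omits the proof of Lemma~\ref{Mean_Rate_Lemma_Avg_Interf} entirely, stating only that it is ``similar to that of Lemma~\ref{Mean_Rate_Lemma},'' and your transcription (drop the $(\cdot)^+$, telescope with $X(0)=0$, divide by $T$, take expectations, and invoke mean rate stability via Definition~\ref{Mean_Rate_Def}) is precisely that analog. Your closing caveat is well taken --- $I$ in \eqref{Problem} is defined pathwise while the telescoping controls only the expected average, a subtlety the paper silently glosses over --- though note that appealing to i.i.d.\ gains alone does not immediately close it, since $P_i(t)$ depends on the queue history, so interpreting the constraint in expectation (the standard convention in this Lyapunov framework) is the cleaner resolution.
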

\begin{proof}
The proof is similar to that of Lemma \ref{Mean_Rate_Lemma} and is omitted for brevity.
\end{proof}
Lemma \ref{Mean_Rate_Lemma_Avg_Interf} says that if the power allocation and scheduling algorithm results in mean rate stable $\Xvq$, then the interference constraint of problem \eqref{Problem} is satisfied.

\subsection{Proposed Algorithm}

The proposed algorithm, illustrated in Algorithm \ref{Alg_1}, is executed at the beginning of each time slot to find the scheduled user. The idea is to choose the user that minimizes\
\begin{equation}
\Psi_i(t)\triangleq \sum_{i=1}^N \left[XP_ig_i+Y_i\lb\sum_{j\in\script{D}_i}\lb W_i^{(j)}- d_i\rb\rb-Q_i\vert\script{D}_i\vert\right],
\label{psi}
\end{equation}
under the constraint that $\sum_{i=1}^N P_i(t)\leq 1$, where we drop the index $t$ from all variables in \eqref{psi} for brevity. This is equivalent to scheduling the user with the smallest $\phi_i(t)$ where, after dropping the index $t$, $\phi$ is defined as
\begin{equation}
\phi_i\triangleq Xg_i+Y_i\sum_{j\in\script{D}_i}W_i^{(j)}- \lb Y_id_i+Q_i\rb R_i,
\end{equation}
or otherwise set $P_i(t)=0$ $\forall i\in\script{N}$ if $\phi_i(t)>0$ $\forall i\in \script{N}$. We now state our proposed algorithm then discuss its optimality.

\begin{algorithm}
\caption{Finding the optimum scheduling rule $\bfP(t)$ at slot $t$}
\begin{algorithmic}[1]
\label{Alg_1}
\STATE Find the set of backlogged users $\script{B}(t)\triangleq\{i:Q_i(t)>0\}$.
\STATE Set $P_{i^*}(t)=1$ where $i^* \in \arg\min_{i\in\script{B}(t)} \phi_i(t)$. Ties broken arbitrarily. Set $P_{i}(t)=0$, $\forall i\neq i^*$.
\STATE Update the variables $X(t)$ and $Y_i(t)$, $\forall i \in\script{N}$ using \eqref{Interf_VQ} and \eqref{Delay_Q}, respectively.
\end{algorithmic}
\end{algorithm}
The optimality of Algorithm \ref{Alg_1} is discussed in Theorem \ref{Optimality}.
\begin{thm}
\label{Optimality}
Under Algorithm \ref{Alg_1}, the inequality $\EE{\vert\script{A}_i(t)\vert}\leq\EE{\vert\script{D}_i(t)\vert}$ holds $\forall i\in\script{N}$ and the virtual queues $\Xvq$ and $\Yivq$ are mean rate stable.
\end{thm}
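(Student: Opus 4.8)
The plan is to apply the drift-plus-penalty method with a quadratic Lyapunov function that couples the physical queues to both virtual queues. Define
\begin{equation}
L(t) \triangleq \frac{1}{2}\left[X(t)^2 + \sum_{i=1}^N Y_i(t)^2 + \sum_{i=1}^N Q_i(t)^2\right],
\end{equation}
and write $\Theta(t)\triangleq\lb X(t),\bfY(t),\bfQ(t)\rb$ for the system state. First I would square each of the update equations \eqref{Queue}, \eqref{Delay_Q} and \eqref{Interf_VQ}, discard the $(\cdot)^+$ operators (which only enlarge the right-hand side), and take the conditional expectation given $\Theta(t)$. Using the bounds $\vert\script{A}_i(t)\vert\leq\Amax$, $\vert\script{D}_i(t)\vert\leq\Rmax$ and $g_i(t)\leq\gmax$, the second-moment contributions collapse into a finite constant $B$, giving a one-slot conditional drift of the form
\begin{equation}
\EE{L(t+1)-L(t)\mid\Theta(t)}\leq B - X(t)\Iavg + \sum_{i=1}^N Q_i(t)\lambda_i + \EE{\Psi(t)\mid\Theta(t)}.
\end{equation}
The key structural fact is that $\Psi(t)$ in \eqref{psi} is exactly the part of the drift that depends on the scheduling vector $\bfP(t)$; hence Algorithm~\ref{Alg_1}, which picks $\bfP(t)$ to minimize $\Psi(t)$ subject to $\sum_i P_i(t)\leq1$, minimizes the entire right-hand side slot by slot.

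Next I would invoke feasibility of problem \eqref{Problem} to assert the existence of a stationary randomized policy $\bfPst$ that depends only on the i.i.d.\ channel states and meets all three constraints with a strict slack $\epsilon>0$: $\EE{\sum_i P_i^* g_i}\leq\Iavg-\epsilon$, $\EE{\vert\script{D}_i^*\vert}\geq\lambda_i+\epsilon$, and a delay-excess term bounded by $-\epsilon$ for each $i$. Because Algorithm~\ref{Alg_1} minimizes $\Psi(t)$, its drift is no larger than that of $\bfPst$; substituting the three slack inequalities and cancelling the $X(t)\Iavg$ and $Q_i(t)\lambda_i$ terms reduces the bound to
\begin{equation}
\EE{L(t+1)-L(t)\mid\Theta(t)}\leq B - \epsilon\left(X(t)+\sum_{i=1}^N Y_i(t)+\sum_{i=1}^N Q_i(t)\right).
\end{equation}

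Dropping the non-positive slack term yields $\EE{L(t+1)-L(t)}\leq B$, and summing over $t=0,\dots,T-1$ gives $\EE{L(T)}\leq \EE{L(0)}+BT$, so $\EE{X(T)^2}$, $\EE{Y_i(T)^2}$ and $\EE{Q_i(T)^2}$ are all $O(T)$. Jensen's inequality then gives $\EE{X(T)}\leq\sqrt{2BT+2\EE{L(0)}}$, and likewise for $Y_i(T)$ and $Q_i(T)$; dividing by $T$ and letting $T\to\infty$ establishes mean rate stability of $\Xvq$ and $\Yivq$ in the sense of Definition~\ref{Mean_Rate_Def}. Finally, mean rate stability of $Q_i(t)$ combined with the telescoped form of \eqref{Queue} (after removing its $(\cdot)^+$) gives $\limsup_{T}\frac{1}{T}\sum_{t}\EE{\vert\script{A}_i(t)\vert-\vert\script{D}_i(t)\vert}\leq0$, which is the asserted $\EE{\vert\script{A}_i(t)\vert}\leq\EE{\vert\script{D}_i(t)\vert}$.

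The main obstacle is the delay virtual queue $Y_i(t)$, whose increment $\sum_{j\in\script{D}_i(t)}\lb W_i^{(j)}-d_i\rb$ depends on the head-of-line delays $W_i^{(j)}$ rather than on a memoryless function of $\Theta(t)$. Two points require care: bounding the second moment of this increment to certify a finite $B$ needs an a~priori bound on the per-slot delay of departing packets, which under bounded arrivals $\Amax$ is controlled by the backlog but is endogenous to the schedule; and verifying that $\bfPst$ attains the delay slack $-\epsilon$ is delicate because packet delay is a path-dependent quantity rather than a standard backlog. I expect the resolution to follow the delay-based Lyapunov technique, relating $\sum_{j\in\script{D}_i(t)}W_i^{(j)}$ to the occupancy experienced by the departing packets so that the drift of $Y_i(t)$ admits a uniform bound.
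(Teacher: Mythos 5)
Your proposal matches the paper's proof at both ends: the same quadratic Lyapunov function in $(X(t),\bfY(t),\bfQ(t))$, the same squared-update bounds collapsing second moments into a constant, and the identical telescoping-plus-Jensen finish (cf.\ \eqref{Jensens}), including deducing $\EE{\vert\script{A}_i(t)\vert}\leq\EE{\vert\script{D}_i(t)\vert}$ from mean rate stability of $Q_i$. But the middle of your argument takes a genuinely different route, and that is exactly where it has a gap --- one you flag yourself but do not resolve. The paper never compares Algorithm \ref{Alg_1} against a stationary randomized policy. Instead it uses the fact that the idle action ($P_i(t)=0$ for all $i$, hence $\script{D}_i(t)=\emptyset$) is always admissible under $\sum_i P_i(t)\leq 1$ and yields $\Psi(t)=0$; since the algorithm minimizes $\Psi(t)$ over an action set containing this option, $\Psi(t)\leq 0$ pointwise, the drift is bounded by the constant $C$, and no feasibility hypothesis, $\epsilon$-slack, or comparison policy is needed anywhere. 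Your route, by contrast, requires a stationary policy $\bfPst$ satisfying $\EEU{\sum_{j\in\script{D}_i(t)}\lb W_i^{(j)}-d_i\rb}\leq-\epsilon$ at every state, and this step would fail: $W_i^{(j)}$ is history-dependent, so at states reached along the algorithm's own sample path where head-of-line delays are large, \emph{every} departing packet has $W_i^{(j)}-d_i$ large and positive, and no memoryless randomized action can force that conditional expectation to be negative. The standard i.i.d.-policy comparison applies when the constraint increment is a function only of the current action and i.i.d.\ randomness; the delay increment here is not of that form, which is precisely why the paper resorts to the idle-action device instead.

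To your credit, your drift expansion is actually more complete than the paper's: you correctly retain the $+\sum_i Q_i(t)\lambda_i$ and $-X(t)\Iavg$ terms, whereas the paper's stated bound $\Delta(t)\leq C+\EEU{\Psi(t)}$ silently drops $\sum_i Q_i(t)\lambda_i$, a term not dominated by anything in $\Psi(t)$ when the channel idles --- so your instinct that a service-slack condition is needed to stabilize the $Q$-queues is sound, and the paper's own treatment of that term is looser than yours. Likewise, both you and the paper assume rather than derive the uniform second-moment bound on $\sum_{j\in\script{D}_i(t)}W_i^{(j)}$ (the paper defers to \cite[Lemma7]{li2011delay}); since delays of departing packets are not a priori bounded, your constant $B$ (the paper's $C_{Y_i}$) indeed requires the endogenous-delay control you mention. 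The concrete missing idea relative to the paper is thus the observation that idling is feasible and makes $\Psi(t)\leq0$, which eliminates the comparison-policy step entirely; with that substitution, your telescoping and Jensen arguments go through essentially verbatim.
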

\begin{prf}
See Appendix \ref{Optimality_Proof_Inst}.
\end{prf}
Theorem \ref{Optimality} states that if the problem is feasible, then Algorithm \ref{Alg_1} results in stable virtual queues. This is achieved by balancing between scheduling a user with a large average delay up to slot $t$ but interferes more with the PU at slot $t$, and another user with relatively smaller average delay in the past but has a small gain to the PU at slot $t$. In the proof we show that this algorithm minimizes the drift of the Lyapunov function and thus guarantee that the virtual queues do not build up, indicating that the constraints are satisfied.

In problem \eqref{Problem}, the constraint $\sum_{i=1}^N P_i(t)\leq 1$ is needed to insure that no more than one user is scheduled at each time slot. We note that this constraint means that Algorithm \ref{Alg_1} might set $P_i(t)=0$, $\forall i\in\script{N}$, when $\phi_i(t)>0$, $\forall i\in\script{N}$. Hence, even if there is a packet in the system waiting for transmission, the channel might not be assigned to any user, but will remain idle. While this constraint might not yield a non-idling scheduling algorithm, it guarantees that the interference constraint is satisfied. That is, if this constraint is replaced with the following constraint: $\sum_{i=1}^N P_i(t)= 1$, the resulting algorithm is a non idling one that might not satisfy the PU's interference constraint. We elaborate more on this.

In queuing theory, a non-idling scheduling algorithm always schedules a user whenever there is a packet in the system to be transmitted. In other words, the server (wireless channel) is never left idle (unassigned to any users) unless all users have empty backlogs. Applying any non-idling scheduling algorithm to our problem, although might have better delay performance, results in the PU receiving interference whenever the users are backlogged. This interference averaged over a large period of time might exceed $\Iavg$. However, Algorithm \ref{Alg_1} assigns the channel to a user when its interference gain $\git$ is relatively low, and idles the channel when all gains are relatively high. Hence, our algorithm makes use of the interference channels' random nature and assigns the channel opportunistically to users.

\section{Simulation Results}
\label{Results}
We simulated the system for $N=2$ SUs (refer to Table \ref{Parameters} for complete list of parameter values). The system was simulated for a deterministic direct channel and a Rayleigh fading interference channel. We simulated the system until the average of the virtual queues normalized by the time are negligible, that is $(\EE{X(T)}+\sum_{i=1}^N \EE{Y_i(T)})/\lb(N+1)T\rb<\epsilon$. 

\begin{table}
\centering
		\caption{Simulation Parameter Values}
		\begin{tabular}{|c|c||c|c|}
			\cline{1-4}
			Parameter & Value & Parameter & Value\\ \cline{1-4}
			$\lambda_1=\lambda_2$ & $\lambda \in [0.02,0.4]$ & $\Iavg$ & 2 \\
			$(\bar{\gamma}_1,\bar{\gamma}_2)$ & $(1,1)$&$\epsilon$ & $0.01$ \\
			$(\bar{g}_1,\bar{g}_2)$ & $(0.4,0.2)$ & $(d_1,d_2)$ & $(1.5,5)$ \\
			\cline{1-4}
			\end{tabular}
		\label{Parameters}
		\end{table}

Fig. \ref{NonIdling_PerSU_Delay} plots the delay of each SU versus $\lambda$, where $\lambda \triangleq \lambda_1=\lambda_2$, for two different scenarios; the first being the non idling version of the proposed algorithm, that is we minimize $\Psi(t)$ subject to $\sum_iP_i(t)=1$, while the second is the Max-Weight (MW) algorithm that schedules the user with the highest $Q_i(t)/g_i(t)$. The essence of the MW algorithm lies in assigning the channel to the user who has more packets in the queue and expected to interfere less with the PU. Clearly, the MW will schedule user $2$ more frequently than user $1$ since $\bar{g}_2<\bar{g}_1$, hence the delay of SU $2$ will be less than that of user $1$. This means that the heterogeneity of the interference channels has resulted in differentiation in the service provided to the SUs to protect the PU. On the other hand, our proposed algorithm can bound SU $1$'s average delay to guarantee a fixed QoS even if its channel is worse than SU $2$. The draw back of the non idling algorithm is that the interference constraint is not guaranteed to be satisfied. This is demonstrated in Fig. \ref{NonIdling_Interf} where the average interference of both algorithms coincide.

		\begin{figure}%
			\centering
			\includegraphics[width=1\columnwidth]{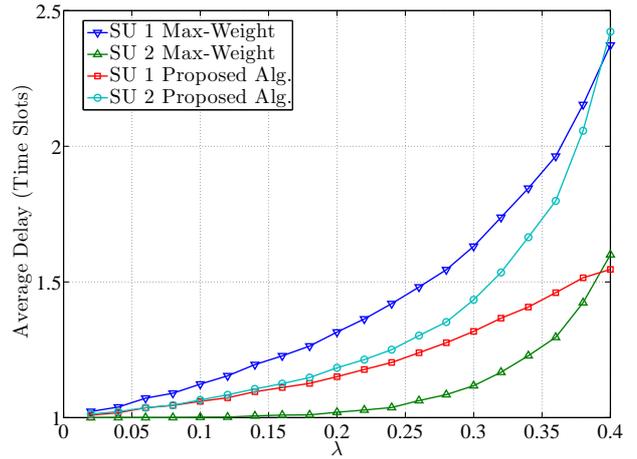}%
			\caption{Average Delay of each SU versus $\lambda_1=\lambda_2=\lambda$ for the non-idling version of Algorithm \ref{Alg_1}.}%
			\label{NonIdling_PerSU_Delay}%
		\end{figure}
		\begin{figure}%
			\centering
			\includegraphics[width=1\columnwidth]{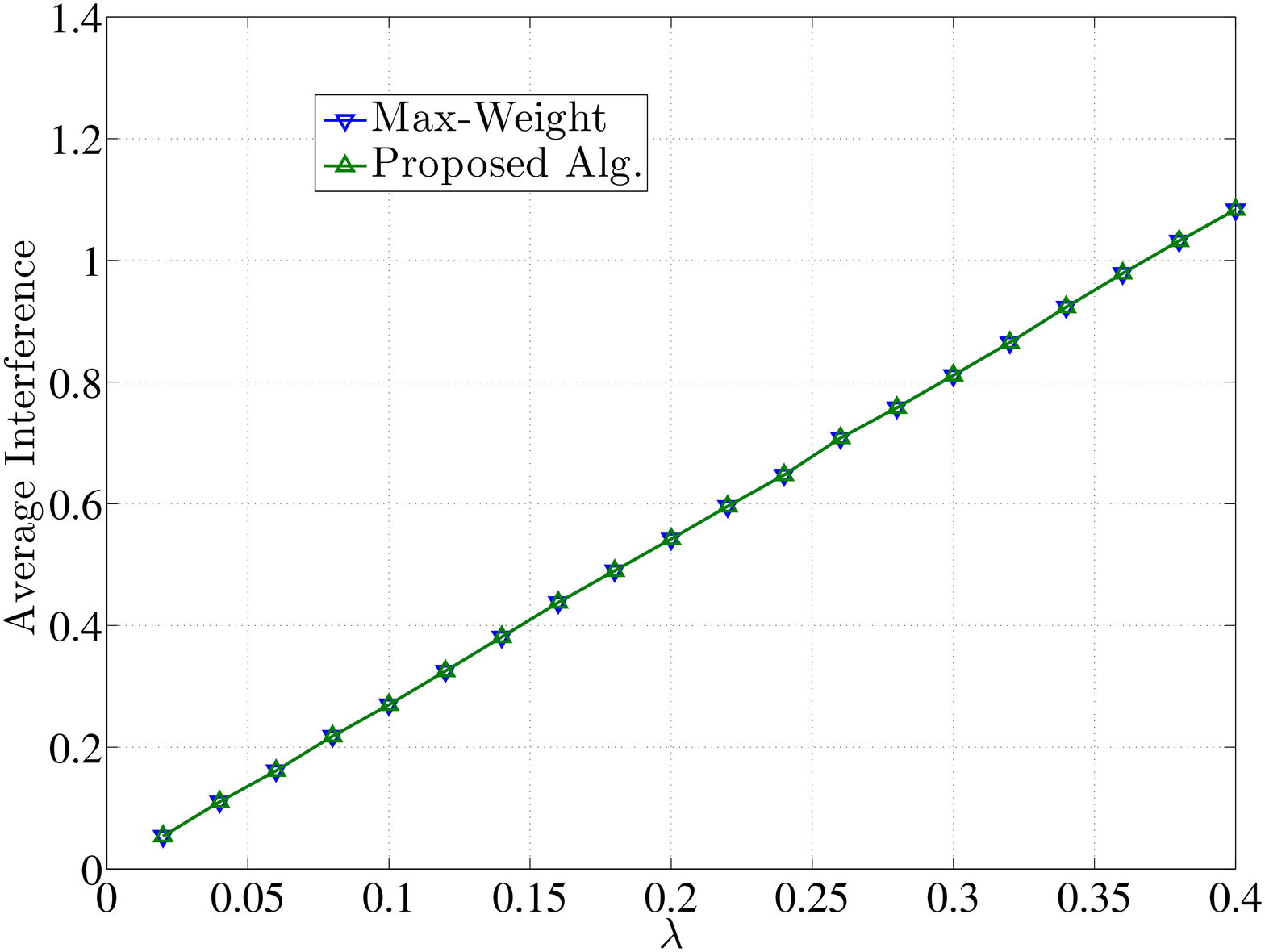}%
			\caption{Average interference to the PU versus $\lambda_1=\lambda_2=\lambda$ for the non-idling version of Algorithm \ref{Alg_1}.}%
			\label{NonIdling_Interf}%
		\end{figure}

Fig. \ref{Idling_PerSU_Delay} compares the per-SU delay performance of the Max-Weight algorithm to that of Algorithm \ref{Alg_1}. Although Algorithm \ref{Alg_1} suffers a higher sum delay since it is not a non-idling algorithm, it can bound SU $1$'s average delay to the required delay value. At the same time, the PU is protected under the proposed algorithm. This is demonstrated in Fig. \ref{Idling_Interf} where the interference suffered by the PU is less than $\Iavg$ while the Max-weight fails to protect the PU.
		\begin{figure}%
			\centering
			\includegraphics[width=1\columnwidth]{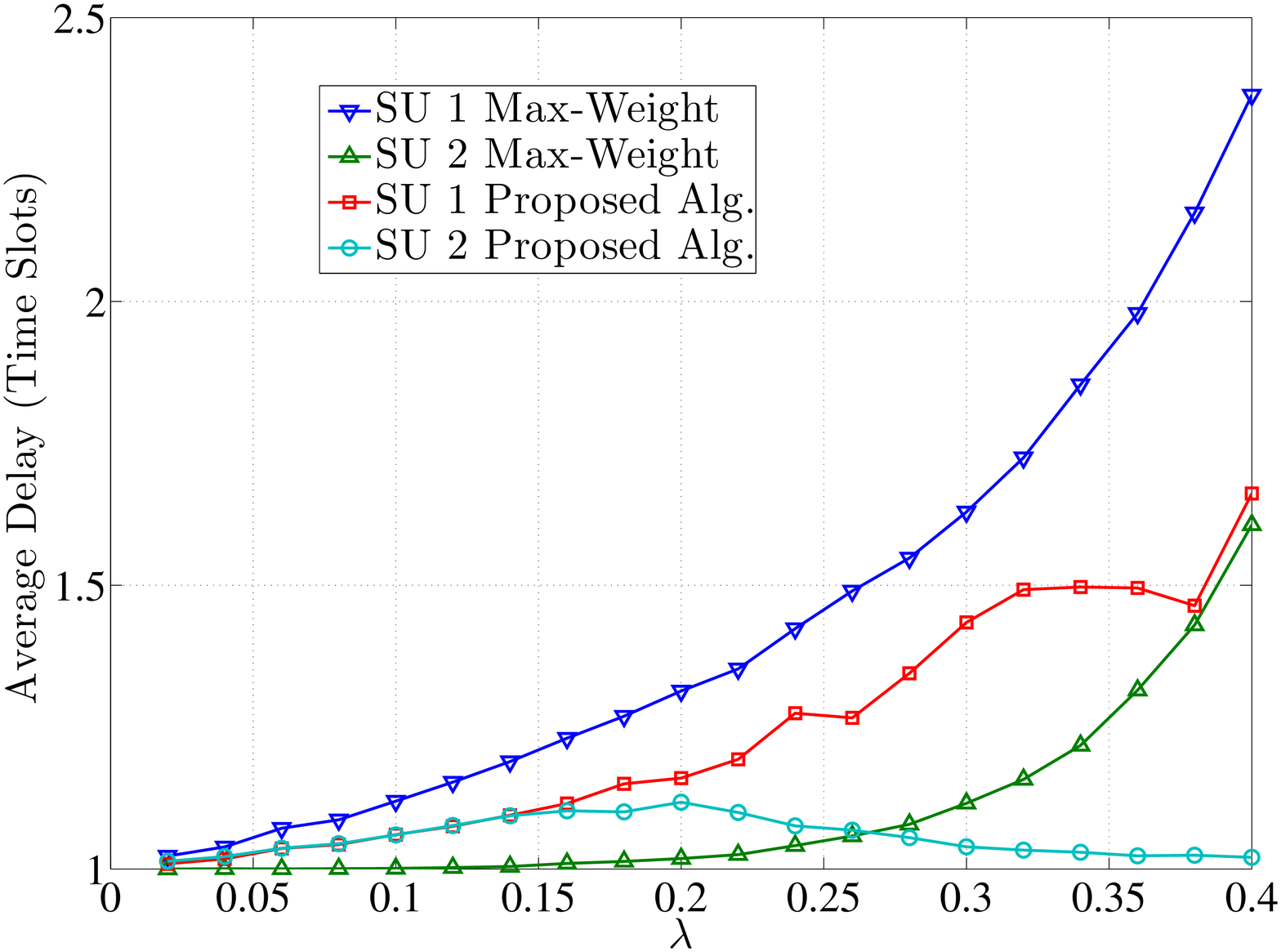}%
			\caption{Average Delay of each SU versus $\lambda_1=\lambda_2=\lambda$ for Algorithm \ref{Alg_1}. SU $1$'s average delay can be controlled using the proposed algorithm.}%
			\label{Idling_PerSU_Delay}%
		\end{figure}
		\begin{figure}%
			\centering
			\includegraphics[width=1\columnwidth]{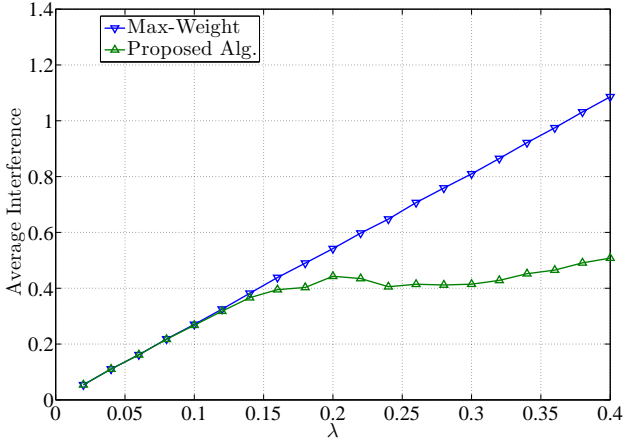}%
			\caption{Average interference to the PU versus $\lambda$ for Algorithm \ref{Alg_1} is less than that for the Max-Weight algorithm.}%
			\label{Idling_Interf}%
		\end{figure}


\section{Conclusion}
\label{Conclusion}
We have studied the scheduling problem in a multi-SU uplink system. The motivation behind this problem is that each SU has an average delay constraint that conventional algorithms cannot satisfy since they neglect the heterogeneity of the interference channels causing intolerable interference to the PU. An optimal scheduling algorithm was proposed. The provided simulation results show that the heterogeneity in the interference channels can lead to suffering of one of the SUs from high delay. The proposed algorithm dynamically reallocates the channel to the suffering SUs to decrease their delays without violating the interference constraint.

\bibliographystyle{IEEEtran}
\bibliography{MyLib}

\appendices
\section{Proof of Theorem \ref{Optimality}}
\label{Optimality_Proof_Inst}
\begin{proof}
In this proof, we show that the drift under this algorithm is upper bounded by some constant, which indicates that the virtual queues are mean rate stable \cite{georgiadis2006resource,urgaonkar2011optimal}.

We define $\bfU(t)\triangleq [X(t) , \bfY(t) , \bfQ(t)]^T$, the Lyapunov function as $L(t) \triangleq \frac{1}{2}X^2(t)+\frac{1}{2}\sum_{i=1}^N \lb Y_i^2(t) + Q_i^2(t)\rb$ and Lyapunov drift to be $\Delta (t) \triangleq \EEU{L(t+1) - L(t)}$ where $\EEU{x}\triangleq\EE{x\vert \bfU(t)}$. Squaring \eqref{Queue}, \eqref{Delay_Q} and \eqref{Interf_VQ} then taking the conditional expectation we can get the bounds
\begin{align}
\nonumber \frac{1}{2}&\E_{\bfU(t)}\left[Q_i^2(t+1)-Q_i^2(t)\right] \leq\\
&Q_i(t)\EEU{\vert \script{A}_i(t)\vert -\vert \script{D}_i(t)\vert} + C_{Q_i},
\label{Queue_Q_Sq2}\\
\nonumber \frac{1}{2}&\E_{\bfU(t)} \left[Y_i^2(t+1)-Y_i^2(t)\right] \leq\\
&Y_i(t)\EEU{\sum_{j\in \script{D}_i(t)}\lb W_i^{(j)}-d_i\rb} + C_{Y_i}, \hspace{0.1in} {\rm and}
\label{Delay_Q_Sq2}\\
\nonumber\frac{1}{2}&\E_{\bfU(t)} \left[X^2(t+1)-X^2(t)\right] \leq \\
&C_X+X(t)\lb\EEU{\sum_{i=1}^N\Pit \git}-\Iavg\rb,
\label{Interf_Q_Sq1}
\end{align}
respectively, where we use the bounds $\lb\sum_{i=1}^N\Pit \git\rb^2+\Iavg^2<C_X$, $\EEU{\vert \script{A}_i(t)\vert^2} +\EEU{\vert \script{D}_i(t)\vert^2}\leq C_{Q_i}$ and
\begin{align*}
& d_i^2\EEU{\vert \script{D}_i(t)\vert^2}+\EEU{\lb\sum_{j\in \script{D}_i(t)} W_i^{(j)}\rb^2}<C_{Y_i}
\end{align*}
in \eqref{Interf_Q_Sq1} where $C_{Q_i}\triangleq \Amax^2+\Rmax^2$ and $C_X\triangleq \gmax^2+\Iavg^2$. We omit the derivation of these bounds for brevity. The derivation is similar to that in \cite[Lemma7]{li2011delay}. Using the bounds in \eqref{Queue_Q_Sq2}, \eqref{Delay_Q_Sq2} and \eqref{Interf_Q_Sq1}, the drift becomes bounded by $\Delta\lb \bfU(t)\rb\leq C+\EEU{\Psi(t)}$, where $C\triangleq C_X+\sum_{i=1}^N\lb C_{Y_i}+C_{Q_i}\rb$. Now, since $\Psi(t)<0$ under Algorithm \ref{Alg_1}, then $\Delta(t)\leq C$. Taking $\EE{\cdot}$, summing over $t=0,\cdots,T-1$, denoting $X(0)\triangleq\bfY_i(0)\triangleq 0$ for all $i\in\script{N}$, and dividing by $T$ we get $\frac{\EE{X^2(T)}}{T}+\sum_{i=1}^N \frac{\EE{Q^2_i(T)+Y_i^2(T)}}{T}\leq C$. Removing all the terms on the left-hand-side of the last inequality except the term $Q^2_i(T)/T$ we obtain $\EE{Q_i^2(T)}/T \leq C$. Using Jensen's inequality we note that
\begin{equation}
\frac{\EE{Q_i(T)}}{T} \leq \sqrt{\frac{\EE{Q_i^2(T)}}{T^2}} \leq \sqrt{\frac{C}{T}}.
\label{Jensens}
\end{equation}
Finally, taking the limit when $T\rightarrow \infty$ completes the mean rate stability proof of $\Qiq$, which means that $\EE{\vert\script{A}_i(t)\vert}\leq\EE{\vert\script{D}_i(t)\vert}$. The proofs of the mean rate stability of $\Xvq$ and $\Yivq$ follow similarly.
\end{proof}

\end{document}